\pgfplotsset{compat=1.11}
\newtheorem{theorem}{Theorem}
\providecommand{\subjclass}[1]{\textbf{2010 AMS Subject Class:} #1}
\providecommand{\keywords}[1]{\textbf{Keywords:} #1}
\title{A Linear Algorithm for Minimum Dominator Colorings of Orientations of Paths}
\author[1]{Michael Cary\footnote{macary@mix.wvu.edu}}
\affil[1]{Division of Resource Economics and Management, West Virginia University}
\begin{document}
\maketitle

\begin{abstract}
In this paper we present an algorithm for finding a minimum dominator coloring of orientations of paths. To date this is the first algorithm for dominator colorings of digraphs in any capacity. We prove that the algorithm always provides a minimum dominator coloring of an oriented path and show that it runs in $\mathcal{O}(n)$ time. The algorithm is available at \url{https://github.com/cat-astrophic/MDC-orientations_of_paths/}. 
\end{abstract}

\subjclass{05C69}

\keywords{dominator coloring, digraph, domination, algorithm}

\section{Introduction}
Let $G=(V,E)$ be a graph. A set $S\subset V$ is called a dominating set if every vertex of $V$ is either in $S$ or adjacent to at least one member of $S$. The domination number of a graph, $\gamma(G)$, is the size of a smallest dominating set of $G$. A dominator coloring of a graph is a proper vertex coloring of the graph which additionally satisfies the property that every vertex dominates some color class in the dominator coloring. Dominator colorings of graphs were first studied by Gera in \cite{gera2007a, gera2007b, gera2006dominator}. Dominating sets and dominator colorings are useful for a myriad of problems and have been applied to studying electric power grids \cite{haynes2002domination} and to sensors in networks \cite{blair2011movable}.

Since the purpose of this paper is to introduce an algorithm for minimum dominator colorings of digraphs, it is important to know what algorithms exist for this problem in the undirected setting. A linear algorithm which finds the domination number of trees was introduced in \cite{cockayne1975linear}. Building from this result, \cite{merouane2015} established a polynomial time algorithm that provides a minimum dominator coloring of trees. Additionally, it was proved in \cite{arumugam2011algorithmic} that the dominator chromatic number, $\chi_{d}(G)$, cannot be found in polynomial time in general for many elementary families of graphs including bipartite and planar graphs.

The first results on dominator colorings of digraphs were the dominator chromatic number of paths and cycles \cite{cary2019}, and that the dominator chromatic number of digraphs is that the dominator chromatic number of a tree is invariant under reversal of orientation \cite{cary2019dominator}. Other interesting results established in \cite{cary2019} on dominator colorings of digraphs include the fact that the dominator chromatic number of a subgraph $H\subset G$ can be larger than the dominator chromatic number of $G$, as well as that $\limsup\limits_{n\to\infty}\frac{\chi_{d}(D)}{\Delta(D)}=\infty$. That these results differ from virtually every other vertex coloring problem makes dominator colorings of digraphs of particular interest in graph theory.

One of the major results from that paper was the following theorem which provides the minimum dominator chromatic number over all orientations of paths. In the proof, many important structural characterizations relating orientations of paths and dominator colorings were established.

\begin{theorem}\label{t1}
The minimum dominator chromatic number over all orientations of the path $P_{n}$ is given by
\begin{equation*}
\chi_{d}(P_{n})=\begin{cases}
k+2 & \mathrm{if}\ n=4k\\
k+2 & \mathrm{if}\ n=4k+1\\
k+3 & \mathrm{if}\ n=4k+2\\
k+3 & \mathrm{if}\ n=4k+3
\end{cases}
\end{equation*}
for $k\geq 1$ with the exception $\chi_{d}(P_{6})=3$.
\end{theorem}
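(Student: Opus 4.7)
My plan is to prove the two directions separately: construct explicit orientations of $P_n$ achieving the stated value (upper bound), then argue no orientation can do better (lower bound). Because the formula is periodic in $n \pmod 4$, I will look for a repeating pattern of length four that contributes exactly one new color per block plus a bounded amount of overhead.

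For the upper bound, I first recall that in a digraph, a vertex $v$ dominates a color class $C$ iff $C\subseteq N^+[v]=\{v\}\cup N^+(v)$. In an oriented path, sinks (vertices with both incident arcs directed inward) have $N^+[v]=\{v\}$ and hence must form singleton color classes; sources can potentially dominate a color class of size up to three. I would try the orientation that alternates blocks of two arrows: $v_1\to v_2\to v_3\leftarrow v_4\leftarrow v_5\to v_6\to v_7\leftarrow v_8\leftarrow\cdots$ (or a suitable shift). Within each block of four I would use one fresh color for the local sink, while using two (or three) global colors, shared across the whole path, for the vertices that can be dominated by nearby sources. I would verify by direct inspection that (i) the coloring is proper on each edge, (ii) every sink sits in its own singleton class, (iii) every non-sink dominates one of the global classes through an out-arc, and (iv) the total color count matches the formula for each residue $n\pmod 4$. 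The cases $n=4k+2$ and $n=4k+3$ need one extra color because the leftover piece at the end forces an additional sink or singleton, and $P_6$ is a genuine small-case exception that I would handle by listing its orientations.

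For the lower bound, I would invoke the structural characterizations from \cite{cary2019} that were established during the proof of dominator chromatic numbers for oriented paths and cycles. The core fact I expect to need is a lower bound on the number of singleton color classes forced by sinks, together with a counting argument: in any orientation of $P_n$, the arcs partition the path into maximal monotone (directed) sub-paths, each monotone piece of length $\ell$ forces at least $\lceil \ell/3\rceil$ distinct colors among its vertices (since a source can cover at most itself and its two out-neighbors in a monotone segment only through a singleton class plus one shared class). Summing over maximal monotone segments and choosing the partition that minimizes the total gives the desired lower bound of $k+2$ or $k+3$ depending on $n\bmod 4$.

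The main obstacle will be the lower bound: showing that no clever orientation can beat the construction. The combinatorics is subtle because sinks and sources interact globally through the coloring (two sources separated by a long run can share color classes), and one must rule out all potentially more efficient orientations. I expect to handle this by a minimality argument — assume a minimum orientation and coloring, classify the short patterns ($\to\to$, $\to\leftarrow$, $\leftarrow\to$, $\leftarrow\leftarrow$) that appear, and show that any attempt to save a color in one place forces the creation of an additional sink or an additional improperly dominated vertex elsewhere. A short case analysis at the two endpoints of the path will then give the $+2$ versus $+3$ split, with the anomaly at $P_6$ emerging naturally as the unique residue-2 case small enough for both endpoints to share their overhead.
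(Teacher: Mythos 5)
First, a point of order: the paper you are being compared against does not prove this statement at all --- Theorem \ref{t1} is quoted from \cite{cary2019}, so there is no in-paper proof to match your argument to, and your proposal must be judged on its own terms against the framework the paper uses. Judged that way, it has a fatal premise. You assert that a sink $v$ has $N^{+}[v]=\{v\}$ and therefore must form a singleton color class. That contradicts the paper's own Figure \ref{f4}, where the sinks $v_{1}$ and $v_{3}$ share the class $C_{1}$, and contradicts the algorithm, which deliberately assigns the shared color $C^{\star}$ to every other out-degree-zero vertex; in the convention of \cite{cary2019} the domination requirement effectively binds only vertices of positive out-degree, and the entire $k+2$ count hinges on sinks being shareable. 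Your premise is also arithmetically incompatible with the value you are trying to prove: forcing a fresh color on every sink, and on every head of an out-degree-one arc (since a vertex $u$ with $N^{+}(u)=\{w\}$ can only dominate the class $\{u\}$ or $\{w\}$), drives every orientation well above $n/4+O(1)$ colors. Your candidate extremal orientation $v_{1}\to v_{2}\to v_{3}\leftarrow v_{4}\leftarrow v_{5}\to\cdots$ fails for exactly this reason: each block of four contains out-degree-one vertices whose heads must be uniquely colored, so already for $n=9$ it costs six colors rather than the claimed four. The true extremal orientation is the fully alternating one with out-degree sequence $0,2,0,2,\dots,0$, colored with $C_{0}$ on all sources, $C^{\star}$ on every fourth vertex, and one fresh singleton per remaining sink.

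The lower bound as sketched also cannot work. Summing $\lceil\ell/3\rceil$ over maximal monotone segments gives $\Theta(n)$ on the alternating orientation (every segment is a single arc), which would contradict the theorem, while on a genuinely monotone segment of length $\ell$ the true forced cost is $\ell-1$ singletons, not $\ell/3$ --- the $\ell/3$ heuristic belongs to undirected domination, where closed neighborhoods have three vertices that are pairwise usable. More fundamentally, additive per-segment accounting is impossible here because the classes $C_{0}$ and $C^{\star}$ are shared across the entire path; any correct lower bound must instead establish (i) that all in-degree-zero vertices may be assumed to lie in one common class, (ii) that every vertex with an in-neighbor of out-degree one is forced into a singleton, and (iii) that each out-degree-two vertex must contain an entire color class inside its two-element out-neighborhood, so that among any four consecutive vertices of a $2$-chain at least one new class must appear. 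Both directions of your argument need to be rebuilt around these structural facts before the residue-class bookkeeping and the $P_{6}$ exception can be carried out.
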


To conclude the introduction, an example of an oriented path of length five is presented below, and a minimum dominator coloring is provided in the caption. The reader is referred to \cite{haynes2013fundamentals} as the standard reference for domination.

\begin{figure}[h!]
\centering
\begin{tikzpicture}[-,>=stealth',shorten >=1pt,auto,node distance=2cm,
                    thick,main node/.style={circle,draw}]
  \node[main node] (A)					      {$v_{1}$};
  \node[main node] (B) [right of=A]		      {$v_{2}$};
  \node[main node] (C) [right of=B] 	      {$v_{3}$};
  \node[main node] (D) [right of=C]      	  {$v_{4}$};
  \node[main node] (E) [right of=D]           {$v_{5}$};
  \draw[thick,->] (B) to [bend left] (A);
  \draw[thick,->] (B) to [bend left] (C);
  \draw[thick,->] (D) to [bend left] (C);
  \draw[thick,->] (D) to [bend left] (E);
\end{tikzpicture}
\caption{An orientation of the path $P_{5}$ which has dominator chromatic number 3. The color classes of $P_{5}$ in a minimum dominator coloring may be given by $C_{0}=\{v_{2},v_{4}\}$, $C_{1}=\{v_{1},v_{3}\}$, and $C_{2}=\{v_{5}\}$.}
\label{f4}
\end{figure}
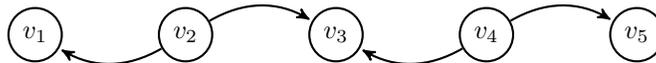

\section{The Algorithm}
The purpose of this paper is to present the first algorithm which provides a minimum dominator coloring of a directed graph. In particular, we present an algorithm which provides a minimum dominator coloring of orientations of paths. After providing the algorithm, we will prove that it runs in $\mathcal{O}(n)$ time.

The algorithm works by sequentially through the vertex set, from $v_{1}$ through $v_{n}$, and colors each vertex in such a a way as to minimize the number of colors used in a proper dominator coloring. From Theorem \ref{t1} in \cite{cary2019} we know that all vertices with in-degree equal to zero must belong to the same color class. For this reason, the first thing the algorithm checks for is precisely this. Assuming this is not the case, another immediately guaranteed coloring results is that any vertex that is dominated by a vertex of out-degree one must be uniquely colored. Once these two cases are checked for, all that remains in an orientation of a path are vertices whose entire in-neighborhood consists of vertices with out-degree equal to two (notice that this may include vertices with out-degree zero or out-degree one).

It is easy to see that, for oriented paths, after removing all vertices with with at least one in-neighbor having out-degree one, what remains are subpaths in which every vertex has either out-degree zero or out-degree two, i.e., subpaths which have the following out-degree sequence pattern: $\{0,2,0,\dots,0,2,0\}$ (it is possible that one or both of the end vertices of such a subpath do not have out-degree zero, but since all of the out-degree two vertices are assigned the same color, and since such an end vertex would have out-degree two in the full path, we may ignore end vertices of these subpaths that do not have out-degree zero). From Theorem \ref{t1} in \cite{cary2019} we know that such oriented paths are minimized in terms of dominator colorings precisely when the first vertex is assigned a color $C^{\star}$ which is used for all vertices of out-degree zero that are not uniquely colored, and when this color is assigned to every other vertex with out-degree zero (every fourth vertex in the path). For convenience, we refer to these paths as 2-chains since they are maximal subpaths with respect to the density of vertices with out-degree two. Also, 2-chains of length three are an exception to this coloring scheme as they may have both vertices of out-degree zero belong to the same color class. This case is addressed in the algorithm as well, but first we describe the process for handling all other 2-chains.

By paying attention to where we are within a 2-chain in an oriented path, we can proceed in order through the vertex set of an oriented path and provide a minimum dominator coloring. The method used in the algorithm is to have a variable $\alpha\in\mathbb{Z}_{2}$ indicate whether or not we should use the color $C^{\star}$ or a new color when coloring vertices of out-degree zero. The variable $\beta$ indicates whether we have established the color $C^{\star}$ yet. Since it turns out that 2-chains of length three may use different colors for the two vertices of out-degree zero if one vertex is colored with $C^{\star}$ and $C^{\star}$ is present elsewhere in the path (this will be proven in the theorem below), the algorithm handles 2-chains of length three by coloring both vertices the same if and only if the color $C^{\star}$ has not yet been established (i.e., if $\beta $ still has a value of $0$).

Lastly we address the exception of the path $P_{6}$ which has either the out-degree sequence $\{0,2,0,2,0,1\}$ or its reversal. As is stands, the algorithm would minimally color the reversal of this particular orientation of $P_{6}$ but not this particular orientation of $P_{6}$. Because this was the only exception listed in Theorem \ref{t1} from \cite{cary2019}, and because checking an input ($n$) will not alter the time complexity of the algorithm, we simply check for this occurrence in the same \textbf{if} statement as when we check to see if a 2-chain of length three exists.

Finally we are ready to present the algorithm. After stating the algorithm, we provide its time complexity and prove that it always provides a minimum dominator coloring of an oriented path.

\begin{algorithm}[h!]
\footnotesize
\begin{algorithmic}[1]
\caption{Minimum Dominator Coloring Algorithm for Oriented Paths}
\State \textbf{input} An orientated path $P_{n}$ of length $n$
\State \textbf{initialize} $\mathcal{C}\gets\{C_{0}\}$
\State \textbf{initialize} $\mathcal{F} \gets \emptyset$
\State \textbf{initialize} $\alpha\gets 0$
\State \textbf{initialize} $\beta\gets 0$
\For{$v_{i} \in V(P_{n})$}
\If{$d^{-}(v_{i}) = 0$}
    \State Color $v_{i}$ with $C_{0}$
    \State $\mathcal{F} \gets \{\mathcal{F}\cup C_{0}\}$
\ElsIf{$\exists\ v_{j} \in N^{-}(v_{i})\ \mathrm{s.t.}\ d^{+}(v_{j}) = 1$}
    \State Color $v_{i}$ uniquely with a new color $C_{|\mathcal{C}|}$
    \State $\mathcal{C} \gets \{\mathcal{C}\cup C_{|\mathcal{C}|}\}$
    \State $\mathcal{F} \gets \{\mathcal{F}\cup C_{|\mathcal{C}|}\}$
    \State $\alpha \gets 0$ \Comment{This indicates the end of a 2-chain}
\ElsIf{$\alpha = 0$}
    \If{$\beta = 0$}
        \State Define new color $C^{\star} = C_{|\mathcal{C}|}$
        \State Color $v_{i}$ with color $C^{\star}$
        \State $\mathcal{C} \gets \{\mathcal{C}\cup C^{\star}\}$
        \State $\mathcal{F} \gets \{\mathcal{F}\cup C^{\star}\}$
        \If{$d^{+}(v_{i+1}) = 2 \neq d^{+}(v_{i+3})$ \textbf{or} $n = 6$}
            \State $\alpha \gets 0$ \Comment{This indicates a 2-chain of length 3 or a $P_{6}$}
        \Else
            \State $\alpha \gets 1$
        \EndIf
        \State $\beta \gets 1$
    \Else
        \State Color $v_{i}$ with existing color $C^{\star}$
        \State $\mathcal{F} \gets \{\mathcal{F}\cup C^{\star}\}$
        \State $\alpha \gets 1$
    \EndIf
\Else
    \State Color $v_{i}$ uniquely with color $C_{|\mathcal{C}|}$
    \State $\mathcal{C} \gets \{\mathcal{C}\cup C_{|\mathcal{C}|}\}$
    \State $\mathcal{F} \gets \{\mathcal{F}\cup C_{|\mathcal{C}|}\}$
    \State $\alpha \gets 0$
\EndIf
\EndFor
\State \textbf{return} $|\mathcal{C}|$ \Comment{The Dominator Chromatic Number is: $\chi_{d}(P_{n}) = |\mathcal{C}|$}
\State \textbf{return} $\mathcal{F}$ \Comment{The coloring of $V(P_{n})$, i.e., $\mathcal{F}_{i}=c(v_{i})\ \forall\ v_{i}\in V(P_{n})$}
\end{algorithmic}
\end{algorithm}

Now we show that the time complexity of the algorithm is $\mathcal{O}(n)$. This result follows since we may count a constant number of things the algorithm needs to check for each vertex, hence the algorithm takes at most $cn$ steps to complete for some $c\in\mathbb{N}$.

Next we show that the algorithm provides a minimum dominator coloring of any oriented path. 

\begin{theorem}\label{t-min}
The Minimum Dominator Coloring Algorithm results in a minimum dominator coloring of every oriented path.
\end{theorem}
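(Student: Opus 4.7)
I plan to prove the theorem in two parts: first that the algorithm's output is a valid dominator coloring of the input oriented path, and second that the number of colors it uses equals $\chi_{d}(D)$ for that particular orientation $D$.

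For validity I would follow the three branches of the main loop. Properness is immediate: no two adjacent vertices in an oriented path can both have in-degree zero (one endpoint of every edge receives it), so the assignments of $C_{0}$ never conflict; unique colors by construction collide with nothing; and $C^{\star}$ is assigned only to out-degree-zero vertices all of whose in-neighbors have in-degree zero and are therefore colored $C_{0}\neq C^{\star}$, with consecutive $C^{\star}$-colored vertices separated along the path by at least one $C_{0}$-colored vertex. For the dominator property I would case-split on the structural role of $v_{i}$---in-degree zero, dominated by an out-degree-one vertex, or out-degree zero inside a 2-chain---and in each case exhibit an explicit color class dominated by $v_{i}$.

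For minimality I would appeal to the structural characterizations established in the proof of Theorem~\ref{t1} in \cite{cary2019}, which impose the following necessary conditions on any dominator coloring of $D$: all in-degree-zero vertices must share a single color class, every vertex with an in-neighbor of out-degree one must be uniquely colored, and within each 2-chain the alternating $C^{\star}$-pattern is already optimal. Summing these contributions over the decomposition of $D$ into its 2-chains together with the vertices dominated by an out-degree-one vertex yields a lower bound that coincides with the algorithm's count, which consists of $C_{0}$, one new color per vertex with an out-degree-one in-neighbor, the single shared color $C^{\star}$ whenever any 2-chain is present, and one new color per alternate out-degree-zero vertex within each 2-chain. Comparison with the closed form of Theorem~\ref{t1} across the four residues $n\equiv 0,1,2,3\pmod{4}$ provides a useful global consistency check.

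The part I expect to be most delicate is the exception handling, specifically $P_{6}$ and a length-three 2-chain encountered while $\beta=0$. In both situations the algorithm shares $C^{\star}$ across the two out-degree-zero endpoints of the chain rather than opening a new color, and I must show both that this remains a valid dominator coloring once $C^{\star}$ is later reused in subsequent 2-chains and that the structural lower bound genuinely permits this single-color saving. This requires careful tracking of $\alpha$ and $\beta$ together with the precise locations where $C^{\star}$ reappears further down the path, and it is the heart of the argument.
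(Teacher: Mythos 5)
Your plan follows the same route as the paper's proof: reduce to the three structural vertex types from Theorem~\ref{t1} (in-degree-zero vertices sharing $C_{0}$, uniquely colored vertices having an in-neighbor of out-degree one, and the 2-chains sharing the single color $C^{\star}$), then argue that each 2-chain is colored optimally and that the exceptional cases ($P_{6}$ and a length-three 2-chain met while $\beta=0$) still permit the one-color saving. The validity check you add is not in the paper, and the exception analysis you defer as ``the heart of the argument'' is treated no more explicitly there, so carrying out your outline would if anything yield a more complete argument than the published one.
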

\begin{proof}
Since any vertex with in-degree equal to zero is assigned to the same color class by this algorithm, no counterexample can come from these vertices. Since every vertex with an in-neighbor of out-degree one is colored uniquely by this algorithm, no counterexample can come from these vertices either. Thus any possible counterexample must come from the set of vertices whose entire in-neighborhood consists of vertices of out-degree two.

Since the algorithm colors each 2-chain optimally, it suffices to show that if each 2-chain is colored optimally, with the possible exception of 2-chains of length three whenever $C^{\star}$ already has been used, and the color $C^{\star}$ is common to all 2-chains which have non-uniquely colored vertices, all vertices of in-degree zero are colored similarly, and all vertices with an in-neighbor of out-degree one are colored uniquely (i.e., the conditions of the first paragraph of this proof are met), that the dominator coloring is minimum.

Clearly the color $C^{\star}$ must be common to all 2-chains (with non-uniquely colored vertices), else there are at least two color classes that can be combined. The MDC algorithm for orientations of paths does ensure that the color class $C^{\star}$ is the unique color class shared by 2-chains. With this established, it follows that each 2-chain must be (and in fact is) minimally dominator colored, except for possibly some 2-chains of length three which use the color $C^{\star}$ which was established prior to the instance of that 2-chain, as the only remaining vertices in our path are those vertices of 2-chains which must be colored uniquely. That the resulting dominator coloring is minimum immediately follows.
\end{proof}

\section{Conclusion}
In this paper we established the first algorithm which provides a minimum dominator coloring of directed graphs. Specifically, this algorithm provides a minimum dominator coloring for orientations of paths. We proved that this algorithm always results in a minimum dominator coloring of an oriented path, as well as that the algorithm runs in $\mathcal{O}(n)$ time.

The most likely extensions of this algorithm are to orientations of trees and cycles. While results on the dominator chromatic number of orientations of trees exist, specific results for this class of digraphs are not as established as they are for orientations of cycles, a class of graphs for which the dominator chromatic number is entirely determined. However, the acyclic structure of trees (and theoretically of directed acyclic graphs as well) lend them to being possible candidates for easy extensions of this algorithm.

To conclude this paper we mention an application of this result. For those interested, a python script, which uses this algorithm and provides a visual of a user selected orientation of a path, can be found online at the repository \url{https://github.com/cat-astrophic/MDC-orientations_of_paths/}.

Please feel free to use and modify this script to fit your needs!

\bibliography{MDCalgo}
\end{document}